\NewDocumentCommand{\interval}{O{0} m m O{} O{black} O{} O{}}{%
  \draw[very thick,#5] (#2,#1) -- (#3,#1);
  \draw[very thick,#5] (#2,#1+0.12) -- (#2,#1-0.12);
  \draw[very thick,#5] (#3,#1+0.12) -- (#3,#1-0.12);
  \IfNoValueF{#4}{\node[above=2pt] at ($(#2,#1)!0.5!(#3,#1)$) {#4};}
  \IfNoValueF{#6}{\node[below] at (#2,#1-0.12) {#6};}
  \IfNoValueF{#7}{\node[below] at (#3,#1-0.12) {#7};}
}
\NewDocumentCommand{\intervalthin}{O{0} m m O{} O{black} O{} O{}}{%
  \draw[#5] (#2,#1) -- (#3,#1);
  \draw[#5] (#2,#1+0.12) -- (#2,#1-0.12);
  \draw[#5] (#3,#1+0.12) -- (#3,#1-0.12);
  \IfNoValueF{#4}{\node[above=2pt] at ($(#2,#1)!0.5!(#3,#1)$) {#4};}
  \IfNoValueF{#6}{\node[below] at (#2,#1-0.12) {#6};}
  \IfNoValueF{#7}{\node[below] at (#3,#1-0.12) {#7};}
}
\newtheorem{theorem}{Theorem}
\newtheorem{lemma}{Lemma}
\title{Revoke vs. Restart in Unweighted Throughput Scheduling}
\author{
Changdao He\\
University of Toronto, Toronto, Canada\\
\texttt{changdao.he@mail.utoronto.ca}
}
\date{September 2025}
\begin{document}

\maketitle
\begin{abstract}
We study the unweighted throughput scheduling problem on a single machine in the 
\emph{preemption-revoke} model, where a running job may be aborted at any time, 
but all progress is permanently lost and the job cannot be restarted. Each job $J_i=(r_i,p_i,s_i)$ 
is defined by a release time $r_i$, a processing time $p_i$, and a slack $s_i$, 
and must start no later than $r_i+s_i$ to be feasible. 

We prove that no deterministic online algorithm can achieve a constant competitive ratio. 
The lower bound is established via an adversarial construction: starting from a 
three-job instance where $\textsf{ALG}$ completes at most one job while $\textsf{OPT}$ 
completes all three, we iteratively nest such constructions. By induction, for every $k\ge 3$, 
there exists an instance where $\textsf{ALG}$ completes at most one job, while 
$\textsf{OPT}$ completes at least $k$ jobs. Thus, the competitive ratio can be forced to 
$1/k$, and hence made arbitrarily close to zero. 

Our result stands in sharp contrast to the \emph{preemption-restart} model, where 
Hoogeveen et al.~\cite{Hoogeveen00} gave a deterministic $1/2$-competitive 
algorithm. 

\medskip
\noindent\textbf{Keywords:} throughput scheduling, single-machine scheduling, preemptive scheduling, competitive analysis, online algorithms.
\end{abstract}

\section{Introduction}

The interval and throughput scheduling problems arise in settings where time-sensitive jobs must be scheduled on a single machine. Each jobs $J$ has a release time $r$, processing time $p$, and deadline $d$, and must be completed within its available time window $[r, d)$. Equivalently, the deadline can be defined in terms of slackness, where slack $s=d-r-p$. 

The throughput scheduling objective is to maximize the number (or total weight) of jobs completed on time. A special case is interval scheduling, in which each job must start immediately upon release and run until its deadline, so every job has zero slack.

Two arrival models are commonly studied.
In the real-time model, jobs arrive in non-decreasing order of their release times, corresponding to how real systems encounter tasks in chronological order.
In contrast, the (any-order) online model allows jobs to arrive in arbitrary order, making it strictly more general and typically harder to achieve good competitive ratios. Depending on whether and how an algorithm may interrupt or abandon running jobs, four variants are distinguished:
\begin{enumerate}
    \item The non-preemptive model: once a job is started, it can neither be interrupted nor aborted.
    \item The preemption-revoke model: a job may be aborted at any time but is then permanently lost; this model is mainly studied in the (any-order) online setting.
    \item The preemption-restart model: a job may be aborted and later restarted from the beginning; this model has been primarily studied in real-time settings but could also be considered in online settings.
    \item The preemption-resume model: a job may be paused and later continued from where it left off; this model is mainly studied in real-time settings.
\end{enumerate}

\subsection{Related Work}

The problem of optimally maximizing the number of on-time jobs (equivalently, optimally minimizing the number of tardy jobs) 
originates with Lenstra et al.~\cite{Lenstra77}, who showed that the problem is NP-hard 
in the non-preemptive case. Lawler~\cite{Lawler90} gave the first polynomial-time algorithm for the 
preemption-resume model using dynamic programming, and Baptiste~\cite{Baptiste99} later improved the running 
time to $O(n^4)$, where $n$ denotes the number of jobs in the input. 

The real-time setting for online interval scheduling was introduced in the 1990s, where each job has a weight (or value) and corresponds to a fixed interval that must be accepted or rejected upon arrival. 
Lipton and Tomkins~\cite{Lipton} analyzed the case of two distinct processing times, designed a 2-competitive randomized algorithm, and showed that no algorithm can achieve a better competitive ratio.

In the preemption-revoke model, Woeginger~\cite{Woeginger} established that no online algorithm can achieve a finite competitive ratio when jobs may have arbitrary lengths and weights. For the special cases where job weights are given by C-benevolent (convex-increasing) or D-benevolent (monotone-decreasing) functions of their lengths, or where jobs have arbitrary weights but identical lengths,
he proposed a simple online heuristic (HEU) that achieves a competitive ratio of 4, and proved that no deterministic algorithm can do better. Faigle and Nawijn~\cite{FaigleNawijn} considered the unweighted setting in which all jobs have identical weight and may be assigned to one of $k$ identical machines, and presented an optimal (1-competitive) online algorithm for this case.
Fung et al.~\cite{Fung14} later proposed a 2-competitive barely random algorithm for equal-length jobs and for instances with C-benevolent or D-benevolent weight functions, and proved that the ratio 2 is optimal for all randomized algorithms on these instances.

Another line of research considers throughput scheduling with slack. For unweighted jobs, Chrobak et al.~\cite{Chrobak} analyzed the problem with equal-length jobs and demonstrated that randomization and restarts each improve performance. In the non-preemptive model, they obtained a barely random $5/3$-competitive algorithm. They showed that any barely random algorithm that chooses between two deterministic ones has ratio at least $3/2$, and that this lower bound can be improved to $8/5$ when the two are chosen with equal probability. In the preemption-restart model, they obtained a deterministic $3/2$-competitive algorithm and proved that the competitive ratio $3/2$ is optimal for deterministic algorithms. They also showed a lower bound of $6/5$ for randomized algorithms for this model. Fung et al.~\cite{Fung14} further considered arbitrary weights, equal-length jobs with restarts and designed a 3-competitive barely random algorithm.

In the preemption-resume model, Baruah et al.~\cite{Baruah94} showed that no deterministic online algorithm can achieve a constant competitive ratio.
Kalyanasundaram and Pruhs~\cite{Kalyanasundaram98} proved that randomization can help by presenting a barely random algorithm that achieves a constant competitive ratio, although the ratio is impractically large. Koren and Shasha~\cite{Koren95} considered the weighted version of the same problem and presented an online algorithm which achieves the best possible competitive ratio $(1+\sqrt{k})^2$, where $k$ is the ratio between the largest and smallest value densities among all jobs. Lucier et al.~\cite{Lucier13} designed constant-competitive algorithms under slackness assumptions, where each job’s time window is at least a constant factor larger than its processing time. Formally, if each job $J$ has release time $r$, processing time $p$, and deadline $d$, the slackness parameter $s$ is defined by $d-r\ge s \cdot p$. Their algorithm achieves a competitive ratio of at most $3+O((s-1)^{-2})$ for $1<s<2$ and $2+O(s^{-1/3})$ for $s\ge2$.

Related ideas have also appeared in online network scheduling. 
Garay et al.~\cite{Garay97} studied on-line call control on a line network, where calls must be accepted or rejected on arrival and may be revoked, and 
Garay et al.~\cite{Garay02} analyzed packet scheduling with interleaving, 
in which preempted packets can later resume transmission. Both of these network models belong to the more general (any-order) online setting, 
where requests may arrive in arbitrary order.

Borodin and Karavasilis~\cite{Borodin23} extended the interval scheduling problem to the (any-order) online model, where unweighted intervals may arrive in arbitrary order and previously accepted intervals may be revoked to accommodate new ones. They presented a simple deterministic algorithm that is $2k$-competitive when there are at most $k$ distinct interval lengths and proved that this bound is optimal for deterministic algorithms.

Most relevant to our work is the model of Hoogeveen et al.~\cite{Hoogeveen00}, 
who studied the preemption-restart model in the real-time setting and proved a tight 
$1/2$-competitive deterministic algorithm. 
In contrast, we analyze the preemption-revoke model, in which revoked jobs are permanently lost, 
and show that no deterministic online algorithm can achieve a constant competitive ratio.

\subsection{Our Contribution}
We show that no deterministic online algorithm can achieve a constant competitive ratio in this model. 
Starting from a three-job construction where $\textsf{ALG}$ completes at most one job while $\textsf{OPT}$ completes all three, 
we iteratively nest such constructions. By induction, we show that for every $k\ge 3$, there exists an instance where 
$\textsf{ALG}$ completes at most one job while $\textsf{OPT}$ completes at least $k$ jobs. 
Hence the competitive ratio can be forced to $1/k$, which tends to zero as $k$ grows.

\section{Preliminaries}
\label{sec:prelim}

\subsection{Problem Definition}
Jobs are triples $J_i=(r_i,p_i,s_i)$ with release time $r_i$, processing time $p_i>0$, and slack $s_i\ge 0$. The job is feasible only if it \emph{starts no later than} $r_i+s_i$; we call this requirement the \emph{latest-start constraint}. Let $d_i:=r_i+p_i+s_i$ be the end of the job's time window $[r_i,d_i)$. Notice that the job intervals are closed on the left and open on the right, this convention allows the machine to complete one job and immediately start another.

\subsection{The Preemption-Revoke Model}
In this model, the currently processed job may be revoked (aborted) at any time. 
If a job is revoked, all progress is permanently lost, and the job can neither be resumed nor restarted.
Thus, although an algorithm may revoke jobs during processing, the final solution it produces is non-preemptive.

\subsection{Notation and Convention on Competitive Ratios} In the following sections, we slightly abuse notation.  
We use $\textsf{ALG}$ to denote an arbitrary deterministic online algorithm and $\textsf{OPT}$ to denote an offline optimal schedule.  
When analyzing a specific instance, we also use $\textsf{ALG}$ and $\textsf{OPT}$ to represent the \emph{number of jobs completed} by the algorithm and the optimal schedule, respectively. Note that all jobs scheduled by $\textsf{OPT}$ are completed on time.

In the literature, competitive ratios are reported in two equivalent ways: either as numbers greater than~1 (e.g., ``the algorithm is 2-competitive'') or as fractions less than~1 (e.g., ``the algorithm achieves a competitive ratio of $1/2$''). 
Throughout this paper we follow the latter convention and express the ratio as
\[
\frac{\textsf{ALG}}{\textsf{OPT}} \le 1.
\]

\subsection{Notation Summary}

\begin{table}[ht]
\centering
\renewcommand{\arraystretch}{1.2}
\begin{tabular}{l l}
\toprule
\textbf{Symbol} & \textbf{Meaning} \\
\midrule
$J_i = (r_i, p_i, s_i)$ & Job $i$ with release time, processing time, and slack \\
$r_i$ & Release time of job $J_i$ \\
$p_i$ & Processing time (length) of job $J_i$ \\
$s_i$ & Slack of job $J_i$ \\
$d_i = r_i + p_i + s_i$ & Deadline or end of the job window \\
$[r_i, d_i)$ & Job window of $J_i$ \\
$[a_i,b_i)$ & Time interval during which $J_i$ is processed \\
\bottomrule
\end{tabular}
\caption{Summary of notation used throughout the paper.}
\label{tab:notation}
\end{table}

\section{A Warm-Up Lower Bound}

Before proving the full impossibility result, we start with a simple three-job instance that captures the essence of the preemption-revoke limitation. This warm-up provides intuition for the adversarial structure used later and already yields a lower bound of~$1/3$ on the competitive ratio of any deterministic online algorithm.

\begin{theorem}\label{thm:one-third}
For the unweighted throughput problem on one machine in the preemption-revoke model, no deterministic online algorithm is $>1/3$-competitive. In particular, for any deterministic online algorithm $\textsf{ALG}$ there is an input on which $\textsf{ALG}$ completes at most one job while $\textsf{OPT}$ completes three.
\end{theorem}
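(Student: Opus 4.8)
The plan is to construct an adaptive adversary that reveals the three jobs one at a time in a nested ``cascade,'' reacting to the running decisions of $\textsf{ALG}$ (which are determined in advance since $\textsf{ALG}$ is deterministic). The outermost job $J_1$ is long and has generous slack. The adversary waits; the instant $\textsf{ALG}$ starts running $J_1$, say at time $a_1 \le s_1$, it releases a shorter job $J_2$ at time $a_1+\delta$ with slack $s_2 < p_1$, so that the latest-start time of $J_2$ falls strictly inside the interval $[a_1, a_1+p_1)$ during which $\textsf{ALG}$ is busy with $J_1$. Hence, to complete $J_2$ at all, $\textsf{ALG}$ must revoke $J_1$ and lose it permanently. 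Inductively, as soon as $\textsf{ALG}$ commits to $J_2$ the adversary releases $J_3$ inside $J_2$'s run with the analogous property ($s_3 < p_2$). If at some stage $\textsf{ALG}$ instead keeps running its current job, or idles, the adversary releases the remaining job(s) at the fallback time equal to the latest-start of the declined job; this branch is handled below and is no better for $\textsf{ALG}$. The same gadget, nested more deeply, is what drives the general lower bound later.

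I would first verify the $\textsf{ALG}$ side: every branch of the resulting case tree ends with $\textsf{ALG}$ having completed at most one job. If $\textsf{ALG}$ ever completes $J_1$, then because $J_2$ and $J_3$ are released after $a_1$ and have latest-start times before $a_1+p_1$ (using $s_2 < p_1$ and the slightly stronger $s_2 + s_3 < p_1$), it can complete neither; if $\textsf{ALG}$ revokes $J_1$ to pursue $J_2$, the identical argument one level down leaves it with at most one of $J_2, J_3$; and a further revoke leaves it with only $J_3$. Next the $\textsf{OPT}$ side: on the instance actually produced, $\textsf{OPT}$ completes all three, exploiting the fact that it is offline and hence may schedule $J_1$ independently of $a_1$. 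When the commitment $a_1$ is late, $\textsf{OPT}$ runs $J_1$ first on $[0,p_1)$ --- finishing before $J_2$ is even released --- and then fits $J_2$ and then $J_3$ into their windows; when $a_1$ is early, $\textsf{OPT}$ runs the two short jobs first (in whichever order respects their latest-start times, which uses $p_3 \le s_2$ and $p_2 \le s_2 + s_3$) and then runs $J_1$ last, feasible because $s_1$ exceeds $p_2+p_3$. The handful of boundary configurations --- $\textsf{ALG}$ revoking instantly, so $J_2$ and $J_3$ are released almost simultaneously, and $\textsf{ALG}$ committing at the very last instant --- are checked directly against these inequalities.

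The main obstacle is the parameter selection, which must reconcile two opposing demands. The slacks $s_2, s_3$ have to be small enough that $\textsf{ALG}$ is genuinely cornered --- no free slot remaining for the next job after it finishes the current one --- for \emph{every} commit time $a_1 \in [0,s_1]$ and every revoke time; yet $s_1$, together with the relations $s_2+s_3<p_1$, $p_3 \le s_2$, and $p_2 \le s_2+s_3$, must still leave $\textsf{OPT}$ enough room to pack all three jobs in every branch, the binding case being a late commitment, where $\textsf{OPT}$'s only recourse is to have run $J_1$ at the very start. One checks that concrete values such as $p_1 = 3,\ s_1 = 5,\ p_2 = 3/2,\ s_2 = 1,\ p_3 = 1/2,\ s_3 = 1$, with infinitesimal offsets $\delta$ separating the release times from the commit and revoke times, satisfy all the constraints simultaneously; this is a finite verification.
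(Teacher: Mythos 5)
Your overall plan—a nested cascade of three jobs, each released an instant after \textsf{ALG} commits to the previous one, with a ``fallback'' release just past the latest-start time if \textsf{ALG} declines—is exactly the paper's adversarial structure. But there is a real gap in the \textsf{OPT} side of the argument, and it traces to your choice of slacks.

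The paper imposes $s_i \ge 2p_i$ for every job, and this hypothesis is doing essential work: it is precisely what makes Lemma~\ref{lem:gap} go through, guaranteeing that after an arbitrary sub-window of length $\le p_i$ is carved out of $J_i$'s window, \emph{one} of the two side gaps still has length $\ge p_i$ \emph{and} lies on the correct side of $r_i+s_i$. Your concrete instance ($p_2=3/2$, $s_2=1$) violates this badly, since $s_2 < p_2 < 2p_2$, and the consequence is that \textsf{OPT} can in fact be denied all three jobs. Take $r_1=0$, $a_1=0$, so $r_2 = \delta$ and $d_2 = r_2+5/2$; now suppose \textsf{ALG} starts $J_2$ at $a_2 = r_2 + 0.6$ (a legal choice since $s_2=1$), and $J_3$ is released at $r_3 = a_2 + \delta'$. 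Then $J_2$ must begin by $r_2+1$ and occupies a length-$3/2$ block, while $J_3$ must begin in $[r_2+0.6+\delta',\ r_2+0.6+\delta'+s_3]$ and occupies a length-$1/2$ block. One checks that for, say, $\delta'=0.01$ there is \emph{no} disjoint pair of such blocks: placing $J_2$ first means it ends at $r_2+3/2$, which is already past $J_3$'s latest start, and placing $J_3$ first means $J_2$ cannot start before $r_2+1.11 > r_2+s_2$. So \textsf{OPT} $\le 2$ on this branch, and the ratio is only $1/2$. Your stated side constraints ($s_2<p_1$, $p_3\le s_2$, $p_2\le s_2+s_3$) are not the right ones: they handle the two extreme positions of $a_2$ (earliest and latest), but not intermediate $a_2$. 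This is exactly what $s_i\ge 2p_i$ and Lemma~\ref{lem:gap} take care of. (A second, smaller defect: with $p_3=1/2$, $s_3=1$ you have $p_3+s_3 = p_2$, leaving no room for the positive offset $\delta'$; you need $p_3+s_3 < p_2$ strictly, just as the containment you want requires $p_2+s_2 < p_1$, not merely $s_2<p_1$.) The fix is to strengthen your slack requirement to $s_i\ge 2p_i$ for each job, after which the side-gap observation holds uniformly in $a_i$ and your cascade argument becomes correct and essentially identical to the paper's.
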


\noindent We first establish the following simple feasibility lemma used in the construction.

\begin{lemma}\label{lem:gap}
Let $J=(r,p,s)$ with $s\ge 2p$ and $d:=r+p+s$. Let $[x,y)\subset [r,d)$ be an interval of length $y-x\le p$. At least one of the side gaps, $[r,x)$ or $[y,d)$, has length $\ge p$. Moreover, either $y\le r+s$ so that $J$ can be scheduled in $[y, y+p)$, or else $x\ge r+p$ and $J$ can be scheduled in $[r, r+p)$; in either case $J$ can be scheduled entirely in one side gap while meeting its latest-start constraint. \emph{See Figure~\ref{fig:lemma-1}.}
\end{lemma}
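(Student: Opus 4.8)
The plan is to verify the two claims in sequence: first the purely arithmetic statement that one side gap has length at least $p$, and then the scheduling consequence, which requires a short case analysis on the position of $[x,y)$ inside $[r,d)$. Write the total free space on the two sides as $(x-r) + (d-y) = (d-r) - (y-x) = (p+s) - (y-x) \ge (p+s) - p = s \ge 2p$, using $y-x \le p$ and then $s \ge 2p$. Since the two side-gap lengths are nonnegative and sum to at least $2p$, at least one of them is $\ge p$; this proves the first sentence.

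For the scheduling claim, I would split on whether the right gap is long enough to place $J$ after $[x,y)$. Case 1: $y \le r+s$. Then starting $J$ at time $y$ respects the latest-start constraint ($y \le r+s$), and $J$ occupies $[y,y+p)$; I must check $y+p \le d$, i.e.\ $y \le d-p = r+s$, which is exactly the case hypothesis. So $J$ fits in the right gap $[y,d)$. Case 2: $y > r+s$. I would argue this forces $x \ge r+p$: indeed $x = y - (y-x) \ge y - p > (r+s) - p = r + s - p \ge r + p$, where the last inequality uses $s \ge 2p$. Hence the left gap $[r,x)$ has length $x - r \ge p$, and scheduling $J$ in $[r, r+p)$ starts at time $r \le r+s$ (trivially legal) and ends at $r+p \le x \le d$. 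In both cases $J$ is scheduled entirely within one side gap and meets its latest-start constraint.

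To keep the two sentences of the lemma cleanly separated, I would note that Case 1 is precisely the ``$y \le r+s$'' alternative and Case 2 the ``$x \ge r+p$'' alternative stated in the lemma, and that they are exhaustive (either $y \le r+s$ or $y > r+s$), so the disjunction in the lemma statement holds. The bound ``$\ge p$ on one side gap'' then follows a posteriori from whichever case applies, consistent with the first sentence proved above.

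The only mild subtlety — the ``hard part,'' such as it is — is making sure the inequalities are the right (non-strict vs.\ strict) ones so that the half-open interval convention $[r,d)$ is respected: a job placed in $[y, y+p)$ is legal exactly when $y+p \le d$ (not $<$), since the window is closed on the left and the machine may finish one job and start the next at the same instant, and similarly $r+p \le x$ suffices for the left placement because $[r,r+p)$ and $[x,\cdot)$ abut without overlapping. I would state these endpoint checks explicitly rather than leaving them implicit, and reference Figure~\ref{fig:lemma-1} for the picture of the two side gaps straddling $[x,y)$.
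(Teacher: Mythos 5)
Your proof is correct and follows essentially the same approach as the paper: the identical total-side-gap computation $(x-r)+(d-y)\ge s\ge 2p$, then the same case split on $y\le r+s$ versus $y>r+s$, scheduling $J$ in $[y,y+p)$ or $[r,r+p)$ respectively. Your derivation of $x>r+p$ in the second case (chaining $x\ge y-p>r+s-p\ge r+p$) is a slightly cleaner rewrite of the paper's algebra, but the argument is the same.
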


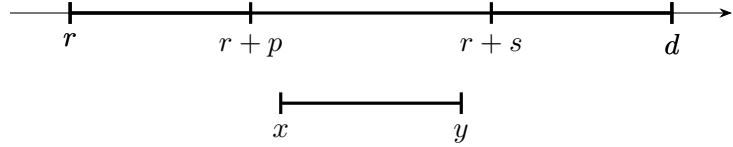
\begin{figure}[ht]
\centering
\begin{tikzpicture}[x=0.8cm,y=1.2cm,>={Stealth[length=2.5pt]}]
  \draw[-{Stealth}] (0,0) -- (12,0);
  \interval{1}{11}[][black][$r$][$d$]
  \interval[-1]{4.5}{7.5}[][black][$x$][$y$]
  \interval{8}{11}[][black][$r+s$][$d$]
  \interval{1}{4}[][black][$r$][$r+p$]
\end{tikzpicture}
\caption{Intervals $[r, d)$ and $[x, y)$ illustrating the side-gap condition.}
\label{fig:lemma-1}
\end{figure}

\begin{proof}
The total side-gap length is
\[
(x-r)+(d-y)=(d-r)-(y-x)=(p+s)-(y-x)\ge (p+s)-p=s\ge 2p,
\]
so at least one side gap has length $\ge p$. If $y\le r+s$, then starting job $J$ at $y$ is on time and would complete by $y+p\le r+s+p = d$. Otherwise $y>r+s$, recall that $y-x\le p$, then we have $p-r\ge (y-x)-r>r+s-x-r=s-x$. Hence $x-r>s-p\ge 2p - p=p$, so starting job $J$ at $r$ is on time and would complete by $r+p<x$.
\end{proof}

\begin{proof}[Proof of Theorem~\ref{thm:one-third}]
Fix any deterministic online algorithm $\textsf{ALG}$. We adversarially release jobs as follows.

\medskip\noindent\textbf{Step 1.}
Release $J_1=(r_1,p_1,s_1)$ with $s_1\ge2p_1$.

\medskip\noindent We branch on $\textsf{ALG}$'s response.

\medskip\noindent\textbf{Case 1 ($\textsf{ALG}$ ignores $J_1$).} If $\textsf{ALG}$ does not start $J_1$ before or on $r_1+s_1$, then $J_1$ expires and $\textsf{ALG}$ completes $0$ jobs while $\textsf{OPT}$ completes $1$, hence the competitive ratio is zero. This case already proves a lower bound; we therefore focus on the nontrivial case.

\medskip\noindent\textbf{Case 2 ($\textsf{ALG}$ runs $J_1$).} $\textsf{ALG}$ starts $J_1$ at some time $a_1$ where $r_1\le a_1\le r_1+s_1$ and would complete it at $b_1:=a_1+p_1\le d_1$.

\medskip\noindent\textbf{Step 2.}
Immediately after $a_1$, for some arbitrarily small $\varepsilon>0$, release a job $J_2=(r_2,p_2,s_2)$ with
\[
r_2:=a_1+\varepsilon,\qquad s_2\ge 2p_2,\qquad d_2:=r_2+p_2+s_2\le b_1.
\]
Thus $[r_2,d_2) \subset [a_1, b_1) \subset [r_1, d_1)$, and the entire window $[r_2,d_2)$ is \emph{strictly contained} in the processing interval $[a_1,b_1)$ of $J_1$. See Figure~\ref{fig:step-2}.

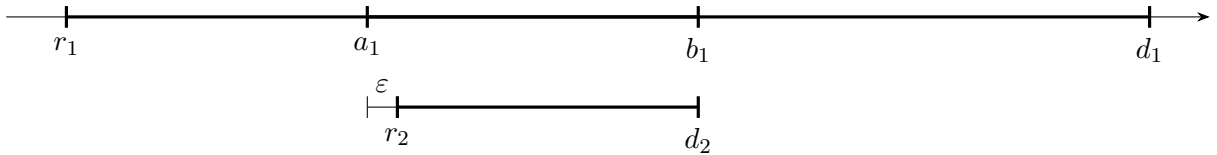
\begin{figure}[ht]
\centering
\begin{tikzpicture}[x=0.8cm,y=1.2cm,>={Stealth[length=2.5pt]}]
  \draw[-{Stealth}] (0,0) -- (20,0);
  \interval{1}{19}[][black][$r_1$][$d_1$]
  \interval{6}{11.5}[][black][$a_1$][$b_1$]
  \intervalthin[-1]{6}{6.5}[$\varepsilon$][black]
  \interval[-1]{6.5}{11.5}[][black][$r_2$][$d_2$]
\end{tikzpicture}
\caption{$J_2$ is released immediately after $a_1$.}
\label{fig:step-2}
\end{figure}

\medskip\noindent We again branch on $\textsf{ALG}$'s response.

\medskip\noindent\textbf{Case 2.1 ($\textsf{ALG}$ revokes $J_1$ and runs $J_2$).}
Suppose $\textsf{ALG}$ revokes $J_1$ and starts $J_2$ at time $a_2$ where $r_2\le a_2\le r_2+s_2$, and would complete it at $b_2:=a_2+p_2\le d_2$. For some arbitrarily small $\varepsilon>0$, release a job $J_3=(r_3,p_3,s_3)$ with
\[
r_3:=a_2+\varepsilon,\qquad s_3\ge 2p_3,\qquad d_3:=r_3+p_3+s_3\le b_2.
\]
Thus $[r_3,d_3)\subset [a_2,b_2)\subset [r_2,d_2)$, and the entire window $[r_3,d_3)$ is \emph{strictly contained} in the processing interval $[a_2,b_2)$ of $J_2$. See Figure~\ref{fig:case-2.1}.

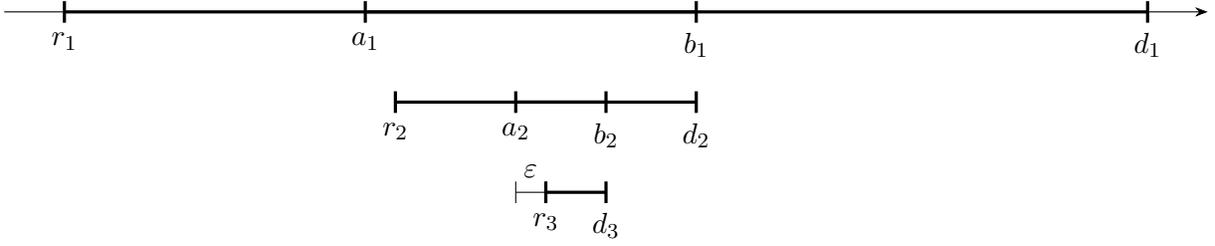
\begin{figure}[ht]
\centering
\begin{tikzpicture}[x=0.8cm,y=1.2cm,>={Stealth[length=2.5pt]}]
  \draw[-{Stealth}] (0,0) -- (20,0);
  \interval{1}{19}[][black][$r_1$][$d_1$]
  \interval{6}{11.5}[][black][$a_1$][$b_1$]
  \interval[-1]{6.5}{11.5}[][black][$r_2$][$d_2$]
  \interval[-1]{8.5}{10}[][black][$a_2$][$b_2$]
  \intervalthin[-2]{8.5}{9}[$\varepsilon$][black]
  \interval[-2]{9}{10}[][black][$r_3$][$d_3$]
\end{tikzpicture}
\caption{$J_3$ is released immediately after $a_2$.}
\label{fig:case-2.1}
\end{figure}

\medskip \emph{What $\textsf{ALG}$ can do.} It either (i) continues $J_2$ and ignores $J_3$, or (ii) revokes $J_2$ to runs $J_3$ in $[r_3, d_3)$. In either option it completes at most one job among $\{J_2, J_3\}$; together with $J_1$ revoked, we have $\textsf{ALG}\le 1$.

\emph{What $\textsf{OPT}$ can do.} It first schedules $J_3$ in $[r_3,d_3)$. By Lemma~\ref{lem:gap} applied to $J_2$ with the interval $[r_3,d_3)$ (of length $\le p_2$ and contained in $[r_2,d_2)$), there is a side gap that allows $J_2$ to be scheduled on time, either in $[r_2,r_3)$ or in $[d_3,d_2)$. Finally, since $[r_2,d_2)\subset [a_1,b_1)\subset [r_1,d_1)$ and $s_1\ge 2p_1$, Lemma~\ref{lem:gap} applied to $J_1$ with the interval $[r_2,d_2)$ (of length $p_2+s_2\le p_1$) gives a side gap for $J_1$ either in $[r_1,r_2)$ or in $[d_2,d_1)$. Hence $\textsf{OPT}=3$ and the ratio is at most $1/3$.

\medskip\noindent\textbf{Case 2.2 ($\textsf{ALG}$ ignores $J_2$).}
Assume $\textsf{ALG}$ continues processing $J_1$ and ignores $J_2$. For some arbitrarily small $\varepsilon>0$, release a job $J_4=(r_4,p_4,s_4)$ with
\[
r_4:=r_2+s_2+\varepsilon,\qquad s_4\ge 2p_4,\qquad d_4:=r_4+p_4+s_4\le d_2.
\]
Thus $[r_4,d_4)\subset [r_2+s_2,d_2)$. Recall that time $r_2+s_2$ is the latest feasible start for $J_2$. See Figure~\ref{fig:case-2.2}.

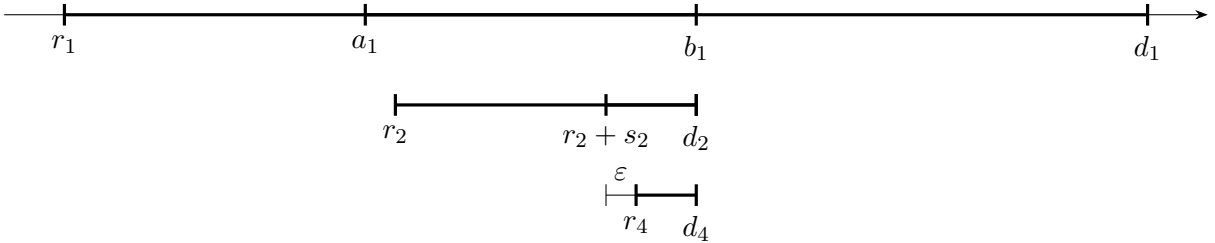
\begin{figure}[ht]
\centering
\begin{tikzpicture}[x=0.8cm,y=1.2cm,>={Stealth[length=2.5pt]}]
  \draw[-{Stealth}] (0,0) -- (20,0);
  \interval{1}{19}[][black][$r_1$][$d_1$]
  \interval{6}{11.5}[][black][$a_1$][$b_1$]
  \interval[-1]{6.5}{11.5}[][black][$r_2$][$d_2$]
  \interval[-1]{10}{11.5}[][black][$r_2+s_2$]
  \intervalthin[-2]{10}{10.5}[$\varepsilon$][black]
  \interval[-2]{10.5}{11.5}[][black][$r_4$][$d_4$]
\end{tikzpicture}
\caption{$J_4$ is released immediately after $r_2+s_2$.}
\label{fig:case-2.2}
\end{figure}

\medskip \emph{What $\textsf{ALG}$ can do.} Since $r_4=r_2+s_2+\varepsilon$, the job $J_2$ becomes infeasible for $\textsf{ALG}$ (its latest start has passed) after the release of $J_4$. So $\textsf{ALG}$ either (i) continues $J_1$ and ignores $J_4$, or (ii) revokes $J_1$ and runs $J_4$ in $[r_4,d_4)$. In either option $\textsf{ALG}$ completes at most one job among $\{J_1,J_4\}$; together with $J_2$ being infeasible, we have $\textsf{ALG}\le 1$.

\emph{What $\textsf{OPT}$ can do.} It first schedules $J_2$ starting at $r_2$; since $s_2\ge 2p_2$, we have $r_2+p_2<r_2+s_2<r_4$, so it would complete $J_2$ by time $r_4$. Next, it schedules $J_4$ in $[r_4,d_4)$. Finally, as before, apply Lemma~\ref{lem:gap} to $J_1$ with the interval $[r_2,d_2)$ (of length $p_2+s_2\le p_1$ and contained in $[a_1,b_1)$) to schedule $J_1$ either in $[r_1,r_2)$ or in $[d_2,d_1)$ while meeting its latest-start constraint. Hence $\textsf{OPT}=3$ and the ratio is at most $1/3$ in this subcase as well.

\medskip
Both \textbf{Case 2.1} and \textbf{Case 2.2} force the competitive ratio to be at most $1/3$, then \textbf{Case 2} has competitive ratio to be at most $1/3$. Combining \textbf{Case 1} and \textbf{Case 2}, either the competitive ratio is unbounded or at most $1/3$, proving the theorem.
\end{proof}

This warm-up illustrates that the \emph{preemption-revoke} model is fundamentally more difficult than the \emph{preemption-restart} model.
In the \emph{preemption-restart} model, the shortest-remaining-processing-time (SRPT) algorithm of Hoogeveen et al.~\cite{Hoogeveen00} achieves a tight competitive ratio of $1/2$. If the SRPT algorithm were applied to our lower-bound instance, according to the strategy of SRPT, a job is preempted only if a newly released job could be completed earlier. Then it would fall into Case 2.1, where it revokes $J_1$ and runs $J_2$ since $J_2$ would be completed earlier, and then it revokes $J_2$ and runs $J_3$ since $J_3$ would be completed earlier. Thus, SRPT also complete at most one of the three jobs, achieving the same $1/3$ ratio as any other deterministic algorithm.

In the next section, we extend this construction to show how the lower bound 
can be made arbitrarily small.

\section{No Constant Competitive Ratio}

We now generalize the $1/3$ construction to show that for every $k\ge 3$, there is an adversarial input on which any deterministic online algorithm $\textsf{ALG}$ completes at most one job, while $\textsf{OPT}$ completes at least $k$ jobs. Hence the competitive ratio can be forced to be $\le 1/k$, and therefore is arbitrarily close to $0$.

\begin{theorem}
For every $k \ge 3$, there exists an instance such that $\textsf{ALG}$ completes at most one job while $\textsf{OPT}$ completes at least $k$ jobs. 
Therefore, no deterministic online algorithm can achieve a constant competitive ratio in the preemption-revoke model.
\end{theorem}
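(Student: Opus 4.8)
The plan is to prove the statement by induction on $k$, with the warm-up (Theorem~\ref{thm:one-third}) furnishing the base case $k=3$. Rather than inducting on the bare statement, I would isolate a strengthened recursive claim. Fix a deterministic algorithm $\textsf{ALG}$ and an integer $m\ge 1$, and suppose that at some moment $\textsf{ALG}$ is running a job $J^\star=(r^\star,p^\star,s^\star)$ with $s^\star\ge 2p^\star$, started no later than $r^\star+s^\star$; let $[a^\star,b^\star)$ with $b^\star=a^\star+p^\star$ be the interval it would occupy, and let $W=[\ell,\rho)\subseteq(a^\star,b^\star)$ be a ``working window'' whose length is at least a threshold $\Lambda_m$ that grows geometrically in $m$. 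The claim is that the adversary can release at most $2m$ further jobs, all with time windows contained in $W$, so that no matter what $\textsf{ALG}$ does: (i) $\textsf{ALG}$ completes at most one job among $J^\star$ and these new jobs; and (ii) there is a subinterval $[\lambda,\mu)\subseteq W$ in which $\textsf{OPT}$ can schedule at least $m$ of the new jobs on time, whence---since $\mu-\lambda\le\rho-\ell<p^\star$ and $[\lambda,\mu)\subseteq[a^\star,b^\star)\subseteq[r^\star,d^\star)$---Lemma~\ref{lem:gap} lets $\textsf{OPT}$ additionally place $J^\star$ in a side gap $[r^\star,\lambda)$ or $[\mu,d^\star)$, so $\textsf{OPT}$ completes at least $m+1$ of these jobs. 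The theorem follows at once: release an outer job $J_0$ with $s_0\ge 2p_0$ and $p_0\ge\Lambda_{k-1}$; if $\textsf{ALG}$ never starts $J_0$ it completes $0$ jobs (and to recover the literal ``$\textsf{OPT}$ completes $k$'' one can instead release $k$ such jobs, one at a time, in far-separated windows), and otherwise apply the claim with $m=k-1$ and $J^\star=J_0$, taking for $W$ a slightly shrunken copy of the interval $\textsf{ALG}$ occupies while running $J_0$.

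To prove the claim by induction on $m$, I would release a single probe $J_1$ with window $[r_1,d_1)\subseteq W$, slack $s_1=2p_1$, and processing time $p_1$ a fixed fraction---say a quarter---of $|W|$, so that $J_1$'s window length $3p_1$ sits well inside $|W|$ and the recursion $\Lambda_m=4\Lambda_{m-1}$ closes. Then branch on $\textsf{ALG}$. If $\textsf{ALG}$ starts $J_1$ on time (possibly after first revoking $J^\star$), then $J^\star$ is lost to $\textsf{ALG}$ and $\textsf{OPT}$ banks it; apply the claim recursively with parameter $m-1$, running job $J_1$ (whose occupied interval has length $p_1\ge\Lambda_{m-1}$), and working window a shrunken copy of that interval. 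If $\textsf{ALG}$ does not start $J_1$ by its latest start $r_1+s_1$---which occurs before the time $b^\star$ at which $\textsf{ALG}$ would finish $J^\star$, since $r_1+s_1<d_1\le\rho\le b^\star$---then $J_1$ is dead to $\textsf{ALG}$; now $\textsf{OPT}$ banks $J_1$ at the head $[r_1,r_1+p_1)$ of its window, and I apply the claim recursively with parameter $m-1$, the \emph{same} running job $J^\star$, and working window the tail $[r_1+s_1,d_1)$ of $J_1$'s window, which has length exactly $p_1\ge\Lambda_{m-1}$ and is disjoint from the head. In either branch, (i) holds because the recursive call already limits $\textsf{ALG}$ to one job while the job dropped at this level ($J^\star$ revoked, or $J_1$ expired) adds nothing; and (ii) holds with $[\lambda,\mu)=[r_1,d_1)$, since the recursive call's $\textsf{OPT}$-placements lie inside either $[a_1,b_1)$ or the tail---hence inside $[r_1,d_1)$---while the job banked at this level ($J^\star$ via Lemma~\ref{lem:gap}, or $J_1$ at its head) fits in the complementary part. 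The base case $m=1$ is the single-probe argument with no recursion.

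I expect the main obstacle to be pinning down this inductive statement so that the ``$\textsf{ALG}$ ignores the probe'' branch is literally a smaller instance of the same claim: this is what forces the claim to be stated in terms of a \emph{working window} that may be a strict subinterval of the running job's occupied interval---not the whole interval, as in the warm-up---and what forces $\textsf{OPT}$ to bank each ignored probe at the head of its window while the recursion descends into that probe's tail. The degenerate cases ($\textsf{ALG}$ revoking $J^\star$ without running the current probe, or ignoring $J_0$ outright) only help the adversary and fold into the same analysis, since the branching never really uses that $J^\star$ is still completable by $\textsf{ALG}$, only that running $J^\star$ precludes running a probe and that a probe past its latest start is dead. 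The rest is routine but must be tracked carefully: fixing the $\varepsilon$'s and the geometric ratio so every window nests inside its parent with room to spare and $p_1$ never drops below the next threshold; noting that only $O(k)$ jobs are released along any branch; and verifying that the intervals $\textsf{OPT}$ uses---the head slots of the expired probes, the recursively produced footprints, and the side gaps from repeated use of Lemma~\ref{lem:gap}---are pairwise disjoint and all meet their latest-start constraints. None of this is deep, but it is where an error would most easily hide.
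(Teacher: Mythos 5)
Your proof is correct and takes essentially the same approach as the paper: nest probe windows adversarially inside the running job's occupied interval (if $\textsf{ALG}$ starts the probe) or inside the ignored probe's tail $[r_i+s_i, d_i)$ (if $\textsf{ALG}$ lets it expire), and use Lemma~\ref{lem:gap} to let $\textsf{OPT}$ pack each outer job into a side gap. Your explicit ``working window'' recursion and geometric thresholds $\Lambda_m$ are a tidier packaging of the paper's inductive invariants on $\mathcal{I}_k$, and your patch for the degenerate case where $\textsf{ALG}$ never starts $J_0$ (releasing $k$ far-separated outer jobs) closes a small gap that the paper leaves implicit.
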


\begin{proof}
We proceed by induction on $k$.

\medskip
\noindent\emph{Base case $k=3$.}
The construction from \textbf{Theorem~\ref{thm:one-third}} already shows the claim: 
in both \textbf{Case 2.1} and \textbf{Case 2.2}, $\textsf{ALG}$ completes at most one job, while $\textsf{OPT}$ completes all three jobs. 
Thus the statement holds for $k=3$.

\medskip
\noindent\emph{Inductive step.}
Suppose that for some $k \ge 3$ we have constructed an instance $\mathcal{I}_k$ consisting of jobs
$J_1,\ldots,J_k$ such that for all $i, J_i=(r_i,p_i,s_i)$ and $s_i\ge 2p_i$; which also satisfying:

\begin{itemize}
\item[(i)]
For each $i=1,\dots,k-1$ we ensure:
\[
[r_{i+1},d_{i+1}) \subset 
\begin{cases}
(a_i,b_i) & \text{if \textsf{ALG} starts $J_i$ at some $a_i\le r_i+s_i$, with $b_i:=a_i+p_i$,}\\[2mm]
[r_i+s_i,\,d_i) & \text{if \textsf{ALG} never starts $J_i$.}
\end{cases}
\]
In both subcases, $[r_{i+1},d_{i+1})\subset [r_i,d_i)$ and $\lvert [r_{i+1},d_{i+1})\rvert = p_{i+1}+s_{i+1} < p_i$;
\item[(ii)] Regardless of $\textsf{ALG}$'s choices, $\textsf{ALG}$ completes at most one job in $\mathcal{I}_k$;
\item[(iii)] $\textsf{OPT}$ can schedule all $k$ jobs in $\mathcal{I}_k$ (using Lemma~\ref{lem:gap} to place each $J_i$).
\end{itemize}

We create $\mathcal{I}_{k+1}$ by releasing one more job inside $J_k$, but we first branch on whether $\textsf{ALG}$ ever starts $J_k$.

\medskip\noindent\textbf{Case ($\textsf{ALG}$ runs $J_k$).}
$\textsf{ALG}$ starts $J_k$ at some time $a_k$ where $r_k\le a_k\le r_k+s_k$ and would complete it at $b_k:=a_k+p_k\le d_k$.
For some arbitrarily small $\varepsilon>0$, release a new job $J_{k+1}=(r_{k+1},p_{k+1},s_{k+1})$ with
\[
r_{k+1}:= a_k+\varepsilon\qquad
s_{k+1}\ge 2p_{k+1},\qquad d_{k+1}:=r_{k+1}+p_{k+1}+s_{k+1}\le b_k.
\]
Thus, $[r_{k+1},d_{k+1})\subset [a_k,b_k) \subset [r_k,d_k)$, and the entire window $[r_{k+1},d_{k+1})$ is \emph{strictly contained} in the processing interval $[a_k,b_k)$ of $J_k$. See Figure~\ref{fig:case-k+1.1}.

\begin{figure}[ht]
\centering
\begin{tikzpicture}[x=0.8cm,y=1.2cm,>={Stealth[length=2.5pt]}]
  \draw[-{Stealth}] (0,0) -- (20,0);
  \interval{1}{19}[][black][$r_k$][$d_k$]
  \interval{6}{11.5}[][black][$a_k$][$b_k$]
  \intervalthin[-1]{6}{6.5}[$\varepsilon$][black]
  \interval[-1]{6.5}{11.5}[][black][$r_{k+1}$][$d_{k+1}$]
\end{tikzpicture}
\caption{$J_{k+1}$ is released immediately after $a_k$.}
\label{fig:case-k+1.1}
\end{figure}
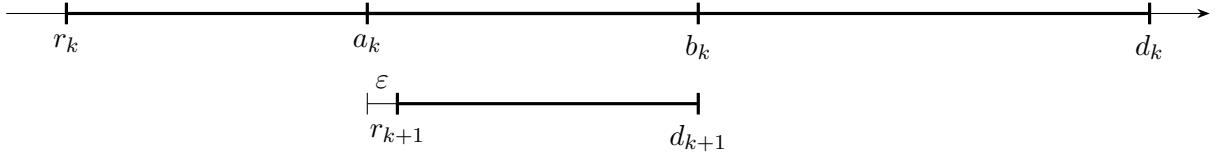

\medskip \emph{What $\textsf{ALG}$ can do.} It either (i) revokes $J_k$ and runs $J_{k+1}$ in $[r_{k+1}, d_{k+1})$ or (ii) continues $J_k$ and ignores $J_{k+1}$. In either option $\textsf{ALG}$ completes at most one job among $\{J_k,J_{k+1}\}$; together with all jobs $J_i$ for all $i<k$ being infeasible (note that by the construction, the machine runs $J_k$ implies that all earlier released jobs are either revoked or ignored), we have $\textsf{ALG}\le 1$.

\emph{What $\textsf{OPT}$ can do.} It first schedules $J_{k+1}$ in $[r_{k+1}, d_{k+1})$. By Lemma~\ref{lem:gap} applied to $J_k$ with interval $[r_{k+1},d_{k+1})$ (of length $\le p_k$ and contained in $[r_k, d_k)$), there is a side gap that allows $J_k$ to be scheduled on time, either in $[r_k,r_{k+1})$ or in $[d_{k+1},d_k)$. For all the jobs $J_i$ with $i<k$, schedule them the same way as scheduled for $\mathcal{I}_k$ (guaranteed by Lemma~\ref{lem:gap}). Hence $\textsf{OPT}$ = $k+1$.

\medskip\noindent\textbf{Case ($\textsf{ALG}$ ignores $J_k$).}
For some arbitrarily small $\varepsilon >0$, release a new job $J_{k+1}=(r_{k+1},p_{k+1},s_{k+1})$ with
\[
r_{k+1}:=r_k+s_k+\varepsilon,\qquad s_{k+1}\ge 2p_{k+1}, \qquad d_{k+1}:=r_{k+1}+p_{k+1}+s_{k+1}\le d_k.
\]
Then $[r_{k+1},d_{k+1})\subset [r_k+s_k,d_k)$. Recall that time $r_k+s_k$ is the latest feasible start for $J_k$. See Figure~\ref{fig:case-k+1.2}.

\begin{figure}[ht]
\centering
\begin{tikzpicture}[x=0.8cm,y=1.2cm,>={Stealth[length=2.5pt]}]
  \draw[-{Stealth}] (0,0) -- (20,0);
  \interval{1}{19}[][black][$r_k$][$d_k$]
  \interval{13.5}{19}[][black][$r_k+s_k$][$d_k$]
  \intervalthin[-1]{13.5}{14}[$\varepsilon$][black]
  \interval[-1]{14}{19}[][black][$r_{k+1}$][$d_{k+1}$]
\end{tikzpicture}
\caption{$J_{k+1}$ is released immediately after $r_k+s_k$.}
\label{fig:case-k+1.2}
\end{figure}
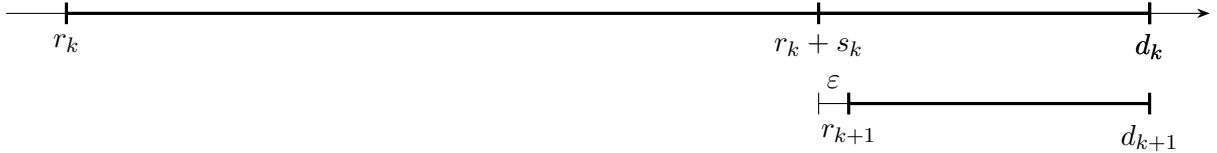

\medskip \emph{What $\textsf{ALG}$ can do.} Since $r_{k+1}=r_k+s_k+\varepsilon$, the job $J_k$ becomes infeasible for $\textsf{ALG}$ (its latest start has passed) after the release of $J_4$. So $\textsf{ALG}$ either (i) continues processing its current outer job (if any) and ignores $J_{k+1}$, or (ii) revokes the job that is currently being processed (if any) and runs $J_{k+1}$ in $[r_{k+1},d_{k+1})$. In either option $\textsf{ALG}$ completes at most one job among $\{J_1,J_2, ..., J_{k-1}, J_{k+1}\}$ (note that by the construction, the machine runs $J_i$ for some $i<k$ implies that all earlier released jobs are either revoked or ignored); together with $J_k$ being infeasible, we have $\textsf{ALG}\le 1$.

\emph{What $\textsf{OPT}$ can do.} It first schedules $J_k$ starting at $r_k$; since $s_k\ge 2p_k$, we have $r_k+p_k<r_k+s_k<r_{k+1}$, so it would complete $J_k$ by time $r_{k+1}$. Next, it schedules $J_{k+1}$ in $[r_{k+1},d_{k+1})$. For all the jobs $J_i$ with $i<k$, schedule them the same way as scheduled for $\mathcal{I}_k$ (guaranteed by Lemma~\ref{lem:gap}). Hence $\textsf{OPT}$ = $k+1$.

\medskip
By induction, for every $k\ge 3$, there exists an instance on which $\textsf{ALG}$ completes at most one job while $\textsf{OPT}$ schedules all $k$ jobs.
\end{proof}

\section{Conclusions and Discussion}

We established that no deterministic online algorithm can achieve a constant competitive ratio in the preemption-revoke model. 
Starting from a simple three-job construction, we showed how to iteratively extend the adversarial instance so that, for every $k \ge 3$, there exists an input sequence where the algorithm completes at most one job while the optimal offline schedule completes $k$ jobs. 
Hence, the competitive ratio can be forced to $1/k$, which approaches zero as $k$ grows.

Overall, the contrast with the preemption-restart model of Hoogeveen et al.~\cite{Hoogeveen00} highlights that allowing preempted jobs to restart preserves some competitiveness, whereas irrevocable revocation renders deterministic algorithms for the unweighted throughput scheduling problem fundamentally non-competitive, with the notable exception of the unweighted interval scheduling problem with revoking, where Faigle and Nawijn~\cite{FaigleNawijn} gave an optimal deterministic algorithm.

Beyond this lower bound, a natural direction is to identify restrictions on input instances that make a constant competitive ratio possible. In particular, if we consider time as discrete steps and allow only $1$ time unit of slack, or restrict jobs to those satisfying $s<2p$, can we obtain any constant competitive ratio for deterministic algorithms?
Another extension is to consider whether randomization can help: can a randomized algorithm, possibly with bounded memory (e.g., barely random), achieve a constant competitive ratio in the same model? 
It would also be interesting to investigate whether constant competitive ratios can be achieved for C-benevolent (or D-benevolent, respectively) weighted instances of the throughput problem, where the job weight is a convex-increasing (or monotone-decreasing, respectively) function of its processing time.
Of particular interest is the proportional-weight case $w_j=\alpha p_j$ for any constant $\alpha$, which is a special case of C-benevolent instances and corresponds to the uniform value-density model analyzed by Koren and Shasha~\cite{Koren95}.
Finally, one could ask whether a constant ratio $c(k)$ can be obtained in the revoke model when there are only $k$ distinct processing times, analogous to the result of Borodin and Karavasilis~\cite{Borodin23} for interval scheduling.

\section*{Acknowledgments}
I am deeply grateful to Professor Allan Borodin for his guidance and many helpful discussions that greatly improved this work.

\end{document}